\documentclass[letterpaper, 10 pt, conference]{ieeeconf}

\IEEEoverridecommandlockouts                              % This command is only needed if 

\usepackage{tikz}
\usepackage{textcomp}

\usetikzlibrary{matrix}
\usetikzlibrary{angles,quotes}

\usepackage{graphics,graphicx} % for pdf, bitmapped graphics files
\usepackage{epsfig} % for postscript graphics files
\usepackage{mathptmx} % assumes new font selection scheme installed
\usepackage{times} % assumes new font selection scheme installed
\usepackage{amsmath} % assumes amsmath package installed
\usepackage{amssymb,nth, mathtools,dsfont}  % assumes amsmath package installed
\usepackage{amsbsy}
\usepackage{epstopdf}
\usepackage[noadjust]{cite}
\usepackage{romannum}
\usepackage{xcolor}
\usepackage{subcaption}
\usepackage{multirow}
\usepackage{comment}

\newcommand{\bmtx}{\begin{bmatrix}}
\newcommand{\emtx}{\end{bmatrix}}
\newcommand{\bsmtx}{\left[ \begin{smallmatrix}} 
\newcommand{\esmtx}{\end{smallmatrix} \right]}
\newcommand{\field}[1]{\mathbb{#1}}
\newcommand{\R}{\field{R}}
\newcommand{\RH}{\field{RH}_\infty}

\newcommand{\N}{\field{N}}

\newcommand{\Sym}{\field{S}}
\newcommand{\pjs}[1]{{\color{red}{(PJS: #1)}}}
\newcommand{\svn}[1]{{\color{blue}{(SVN: #1)}}}

\newtheorem{corollary}{Corollary}
\newtheorem{lemma}{Lemma}

\newtheorem{remark}{Remark}
\newtheorem{defin}{Definition}
\title{Integral Quadratic Constraints for Repeated ReLU}

\author{Sahel Vahedi Noori, Bin Hu, Geir Dullerud, and Peter Seiler% <-this % stops a space
	\thanks{S. Vahedi Noori and P. Seiler are with the Department of Electrical Engineering \& Computer Science, at the University of Michigan ({\tt\small sahelvn@umich.edu} and
		{\tt\small pseiler@umich.edu}). B. Hu is with the Department of Electrical and Computer Engineering at the University of Illinois at Urbana-Champaign ({\tt \small binhu7@illinois.edu}). G. Dullerud is with the Department of Electrical and Computer Engineering at the University of Minnesota Twin Cities ({\tt \small dullerud@umn.edu}).
  }
	}

\begin{document}

\maketitle

\begin{abstract}
This paper presents a new dynamic integral quadratic constraint (IQC) for the repeated Rectified Linear Unit (ReLU).  These dynamic IQCs can be used to analyze stability and induced $\ell_2$-gain performance of discrete-time, recurrent neural networks (RNNs) with ReLU activation functions. These analysis conditions can be incorporated into learning-based controller synthesis methods, which currently rely on static IQCs.  We show that our proposed dynamic IQCs for repeated ReLU form a superset of the dynamic IQCs for repeated, slope-restricted nonlinearities. We also prove that the $\ell_2$-gain bounds are nonincreasing with respect to the horizon used in the dynamic IQC filter. A numerical example using a simple (academic) RNN shows that our proposed IQCs lead to less conservative bounds than existing IQCs.

\end{abstract}

%\begin{IEEEkeywords}
% Absolute stability; Lurye system; Kalman Conjecture; Cycle behaviour. 
%\end{IEEEkeywords}

%--------------------------------------------------------------------------------
\section{Introduction}

There is increasing interest in using recurrent neural networks (RNNs) in dynamical systems and control-related applications.  Unlike feedforward architectures, RNNs contain hidden states that evolve over time, providing an internal memory that captures temporal dependencies. While these models are expressive and powerful, their nonlinear nature makes it challenging to establish rigorous stability and performance guarantees. Such guarantees are particularly important in safety-critical settings, motivating the development of stability analysis tools for systems that incorporate RNN components.

From a systems perspective, an RNN can be viewed as a Lurye-type interconnection, in which linear state dynamics are coupled with static nonlinear activation functions \cite{revay2020contracting, richardson23, hedesh2025robust}. A common approach is to use quadratic constraints (QCs) to bound the input-output behavior of the nonlinearity. Conditions for internal stability and induced gain bounds can then be derived by combining the QCs with dissipativity theory. Classical works such as \cite{Willems:71, Willems:1968} develop QC-based analysis tools for repeated slope-restricted nonlinearities.

Recent work has adapted these QC ideas for the Rectified Linear Unit (ReLU) function, which is a common choice for the activation function \cite{noori24ReLURNN, richardson23, ebihara2021stability, ebihara21}. Since these works exploit the specific structure of ReLU, they often yield tighter guarantees than those obtained from generic slope-restricted QCs. In \cite{noori25CompleteReLU}, we derived a complete set of quadratic constraints for repeated ReLU, which bounds the repeated ReLU as tight as possible up to the sign invariance inherent in quadratic forms.

To further reduce conservatism, it is beneficial to incorporate temporal structure through dynamic integral quadratic constraints (IQCs) \cite{megretski97}. This framework can capture dynamical properties through multiplier filters. A widely used class of multipliers is the Zames--Falb family, which have been studied extensively for slope-restricted nonlinearities in both scalar and repeated settings \cite{Carrasco:2016, carrasco19, safonov2000zames, kulkarni2002all, zhang2021lyapunov, fetzer17}. Table~\ref{tab:LitReview} summarizes all the tools described for QC and IQC-based stability and performance analysis.

\begin{table}[h!]
\centering
\scalebox{1.12}{
\begin{tabular}{ |c|c|c|c| } 
\hline
Method & Nonlinearity & Type & References\\ \hline
Static QCs & Slope-restricted & Repeated & \cite{Willems:71, Willems:1968} \\ \hline
Dynamic IQCs & Slope-restricted & Scalar & \cite{Carrasco:2016, carrasco19}\\ \hline
Dynamic IQCs & Slope-restricted & Repeated & \cite{safonov2000zames, kulkarni2002all, zhang2021lyapunov, fetzer17} \\ \hline
Static QCs & ReLU & Repeated & \cite{noori24ReLURNN, noori25CompleteReLU, richardson23, ebihara2021stability, ebihara21}\\ \hline
Dynamic IQCs & ReLU & Scalar & \cite{noori2025IQCs}\\ \hline
Dynamic IQCs & ReLU & Repeated & Our paper \\ \hline
\end{tabular}
}
\caption{Summary of prior methods for stability analysis of nonlinear Lurye-type interconnections.}
\label{tab:LitReview}
\end{table}

The main contribution of this paper is to develop a new class of dynamic 
IQCs for repeated ReLU.   This builds on the dynamic IQCs devloped in our prior work for scalar ReLU  \cite{noori2025IQCs}.  These IQCs exploit the structural properties of ReLU and, as shown in this paper, yield less conservative stability and performance certificates than existing slope-restricted formulations. We further show that the optimal induced-$\ell_2$ gain bounds are non-increasing with respect to the time-horizon used in the IQC filter. This implies that increasing the IQC filter horizon cannot, in theory, yield a worse (larger) bound.  A simple numerical example demonstrates the effectiveness of the proposed approach and highlights improvements over static QC and dynamic IQC methods in the literature.

%--------------------------------------------------------------------------------
\section{Notation}

This section briefly reviews basic notation regarding vectors, matrices, and signals.  Let $\R^n$ and $\R^{n\times m}$ denote the sets of real $n\times 1$
vectors and $n\times m$ matrices, respectively. Moreover, $\R_{\ge 0}^n$ and $\R_{\ge 0}^{n\times m}$ denote vectors and matrices of the given
dimensions with non-negative entries.  A matrix $M\in \R^{n\times n}$ is a \emph{Metzler matrix} if the off-diagonal entries are non-negative, i.e. $M_{ij}\ge 0$ for $i\ne j$.  A matrix $M\in \R^{n\times n}$ is \emph{doubly hyperdominant} if the off-diagonal elements are non-positive, and all row and column sums are non-negative.  $\Sym^n$ is the set of real symmetric, $n\times n$ matrices. Finally, let $\{Q_i\}_{i=-N}^N \subset \R^{m \times m}$ be given.  A \emph{block-Toeplitz} matrix $M \in \R^{m(N+1)\times m(N+1)}$ is formed from these matrices with $\{Q_0,\ldots,Q_N\}$ along the first row, $\{Q_0,\ldots,Q_{-N}\}$ along the first column, and constant blocks along each diagonal. For example, if $N=2$ then
\begin{align}
  M:= \bmtx Q_0 & Q_1 & Q_2 \\ Q_{-1} & Q_0 & Q_1 \\ Q_{-2} & Q_{-1} & Q_0 \emtx.
\end{align}

Next, let $\N$ denote the set of non-negative integers.  Let $v:\N \to \R^n$ and $w:\N \to \R^n$ be real, vector-valued sequences.  Define the inner product $\langle v,w \rangle : = \sum_{k=0}^\infty v(k)^\top w(k)$.
A sequence $v$ is said to be in $\ell_2^n$ if $\langle v,v\rangle <\infty$. In addition, the $2$-norm for $v \in \ell_2^n$ is defined as
$\|v\|_2:=\sqrt{ \langle v,v\rangle}$.   We will use $\ell_{2e}$ to denote the extended space of sequences whose $\ell_2$ norm is not necessarily finite.
$\mathbb{RL_\infty}$ denotes the set of rational functions with real coefficients that have no poles on the unit circle. $\mathbb{RH_\infty}$ is the subset of functions in $\mathbb{RL_\infty}$ with all poles inside the unit disk of the complex plane.

%A sequence $v$ belongs to the $\ell_{2}$ space if $\sum_{k=0}^\infty |v_k|^2 < \infty$. %We define $\ell^2_{\ge 0} := \left\{w\in \ell_{2e} \;\middle|\; w_n \ge 0 \text{ for all } n \in \mathbb{N} \right\}$. 

%--------------------------------------------------------------------------------
\section{Problem statement}

Consider the interconnection shown in Figure~\ref{fig:LFTdiagram}  with a static, memoryless nonlinearity $\Delta_F$ wrapped in feedback around the top channels
of a nominal system $G$.  This interconnection is denoted as $F_U(G,\Delta_F)$.  
\begin{figure}[h!t]
\centering
\begin{picture}(100,90)(40,20)
 \thicklines
 \put(75,25){\framebox(40,40){$G$}}
 \put(139,40){$d$}
 \put(150,35){\vector(-1,0){35}}  
 \put(46,40){$e$}
 \put(75,35){\vector(-1,0){35}}  
  \put(78,73){\framebox(34,34){
    $\Delta_F$}}
 \put(46,70){$v$}
 \put(55,55){\line(1,0){20}}  
 \put(55,55){\line(0,1){35}}  
 \put(55,90){\vector(1,0){23}}  
 \put(139,70){$w$}
 \put(135,90){\line(-1,0){23}}  
 \put(135,55){\line(0,1){35}}  
 \put(135,55){\vector(-1,0){20}}  
\end{picture}
\caption{Interconnection $F_U(G,\Delta_F)$ of a nominal discrete-time LTI system $G$ and a static, memoryless nonlinearity $\Delta_F$.}
\label{fig:LFTdiagram}
\end{figure}
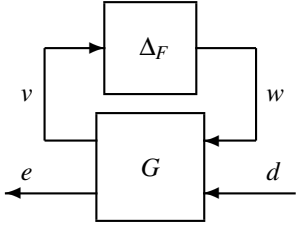

The nominal part $G$ is a discrete-time, linear time-invariant (LTI)
system described by the following state-space model:
\begin{align}
  \label{eq:LTInom}
  \begin{split}
    x(k+1) & = A\, x(k) + B_1\,w(k) +  B_2\, d(k) \\
    v(k) & =C_{1}\,x(k)+D_{11}\, w(k)+ D_{12} \,d(k)\\
    e(k) & =C_{2}\,x(k)+D_{21}\, w(k)+ D_{22}\,d(k),
  \end{split}
\end{align}
where $x(k) \in \R^{n_x}$ is the state at time $k$. Similarly, the inputs at time $k$ are $w(k) \in \R^{m}$ and $d(k)\in \R^{n_d}$ while the outputs at time $k$ are $v(k) \in \R^{m}$ and $e(k)\in \R^{n_e}$.  The  nonlinearity $\Delta_F:\ell_{2e}^m\to \ell_{2e}^m$ is defined by application of a static, memoryless function $F:\R^m \to \R^m$ at each point in time. In other words, $\Delta_F$ maps $v\in \ell_{2e}^m$ to $w=\Delta_F(v)\in\ell_{2e}^m$ by $w(k)=F( v(k) )$ at each time $k$.
The interconnection $F_U(G,\Delta_F)$ is known as a linear fractional transformation (LFT) in the robust control literature \cite{zhou96}. 

This feedback interconnection involves an implicit equation if $D_{11}\ne 0$.  Specifically, the second equation in \eqref{eq:LTInom} combined with $w(k)=F( v(k) )$ yields:
\begin{align}
   \label{eq:WellPosed}
   v(k)=C_{1}\,x(k)+D_{11}\, F( v(k) )+ D_{12} \,d(k).
\end{align}
This equation is \emph{well-posed} if there exists a unique solution $v(k)$ for all  values of $x(k)$ and $d(k)$.  Well-posedness of this equation implies that the dynamic system $F_U(G,\Delta_F)$ is well-posed in the following sense:  for all
initial conditions $x(0)\in\R^{n_x}$ and  inputs $d\in \ell_{2e}^{n_d}$ there exists unique solutions $x\in \ell_{2e}^{n_x}$, $e\in \ell_{2e}^{n_e}$ and $w, v \in \ell_{2e}^{m}$ to the system $F_U(G,\Delta_F)$. There are simple sufficient conditions for well-posedness of \eqref{eq:WellPosed}, e.g. Lemma 1 in \cite{richardson23} (which relies on results in \cite{valmorbida18,zaccarian02}). Thus, we assume well-posedness, for simplicity, in the main results of our paper.

A well-posed interconnection $F_U(G,\Delta_F)$ is \emph{internally stable} if $x(k)\to 0$ from any initial condition $x(0)$ with $d(k)=0$ for $k\in \N$. In other words, $F_U(G,\Delta_F)$ is internally stable if $x=0$ is a globally asymptotically stable equilibrium point with no external input.  A well-posed interconnection $F_U(G,\Delta_F)$ has \emph{finite induced-$\ell_2$ gain} if there exists  $\gamma<\infty$ such that the output $e$ generated by any $d\in \ell_2$ with $x(0)=0$ satisfies $\|e\|_2 \le \gamma \, \|d\|_2$.  The infimum of all such bounds on the input/output gain is denoted by $\|F_U(G,\Delta_F)\|_{2\to2}$.

The focus of this paper is on the specific case where the  nonlinearity is repeated ReLU.
Specifically, define
the scalar ReLU $\phi:\R\to \R_{\ge 0}$ as follows:
\begin{align}
\label{eq:ReLU}
\phi(v) = \left\{
  \begin{array}{ll}
    0 & \mbox{if } v < 0 \\
    v & \mbox{if } v \geq 0 
  \end{array} 
  \right. .
\end{align}
The repeated ReLU $\Phi:\R^m \to \R_{\ge 0}^m$ maps $v$ to $w$ by elementwise application of the scalar ReLU, i.e.,  $w_i=\phi(v_i)$ for $i=1,\ldots,m$.
The corresponding operator $\Delta_\Phi:\ell_{2e}^m\to \ell_{2e}^m$ is defined by applying the repeated ReLU $\Phi$ at each point in time.
The goal of this paper is to derive sufficient conditions, using the properties of the ReLU,
that verify $F_U(G,\Delta_\Phi)$ is internally stable, has finite induced-$\ell_2$ gain, and can be used to compute a bound $\gamma$.

\begin{comment}
\begin{figure}[h!t]
\centering
\begin{picture}(100,65)(170,30)
 \thicklines
  % Graph of ReLU
 \put(170,45){\vector(1,0){100}}  
 \put(260,50){$v$}
 \put(220,30){\vector(0,1){65}}  
 \put(175,85){$w=\phi(v)$}
 \put(170,46){{\color{red} \line(1,0){50}} }  
 \put(220,46){{\color{red} \line(1,1){35}} }  
\end{picture}
\caption{Graph of scalar ReLU $\phi$. \pjs{Do we need this figure now?}}
\label{fig:ScalarReLU}
\end{figure}
\end{comment}

%--------------------------------------------------------------------------------
\section{Preliminary Results}
\label{sec:Prelim}

\subsection{Integral Quadratic Constraints (IQCs)}
\label{sec:IQCs}
In this subsection, we review the definitions of static QCs and dynamic IQCs. We will define these concepts for a general static, memoryless function $F:\R^m \to \R^m$ and its corresponding operator $\Delta_F:\ell_{2e}^m \to \ell_{2e}^m$.

\vspace{0.1in}
\begin{defin}
A function $F:\R^{m} \to \R^{m}$ satisfies the \underline{Quadratic Constraint (QC)} defined by $M\in \Sym^{2m}$ if the
following inequality holds for all $v\in \R^{m}$:
\begin{align}
\label{eq:QCdef}
\bmtx v \\ F(v)\emtx^\top 
M
\bmtx v \\ F(v)\emtx \ge 0 .
\end{align}
\end{defin}
\vspace{0.1in}

QCs are, in general, conservative bounds on the graph of a function. The constraints are useful as they can be easily incorporated into stability and performance conditions for dynamical systems. These can be generalized to dynamic IQCs as formally defined next.

\vspace{0.1in}
\begin{defin}
  \label{def:tdiqc}
  Let $\Psi \in \RH^{n_r \times 2m}$ and $M\in \Sym^{n_r}$ be given. A nonlinearity $\Delta_F:\ell_{2e}^m \to \ell_{2e}^m$ defined by a  function $F: \R^m \to \R^m$ satisfies the time domain, \underline{Integral Quadratic Constraint (IQC)} defined by $(\Psi, M)$ if the following inequality holds for all $v \in \ell_{2e}$ and for all $T_0\ge 0$,
     \begin{align}
      \label{eq:tdhardiqc}
      \sum_{k=0}^{T_0} r(k)^\top M r(k)  \ge 0,
    \end{align}
  where $r$ is the output of $\Psi$ from zero initial conditions and driven by inputs $(v,w)$ where $w=\Delta_F(v)$.
\end{defin}
\vspace{0.1in}

Note that if $\Psi:= I_{2m}$ then $r=\bsmtx v \\ w \esmtx$. It follows that $\Delta_F$ satisfies the dynamic IQC with $\Psi = I_{2m}$ and $M\in\Sym^{2m}$ if and only if the function $F$ satisfies the corresponding QC with the same $M$. Definition~\ref{def:tdiqc} is called a \emph{hard} IQC because the constraint~\eqref{eq:tdhardiqc} holds for all $T_0\ge 0$.  
This is in contrast to \emph{soft} IQCs in the literature
which only hold as $T_0\to \infty$. Moreover, soft IQCs require the additional assumption that $v$, $w\in \ell_2$ to ensure the summation in  \eqref{eq:tdhardiqc} remains finite
as $T_0\to \infty$. 

\subsection{Stability and Performance Condition}
\label{sec:StabCond}

This subsection reviews a sufficient condition for internal stability and finite induced-$\ell_2$ gain for the interconnection $F_U(G,\Delta_F)$. The condition relies on dissipation inequalities and dynamic (hard) IQCs. First, partition the LTI system $G$ in \eqref{eq:LTInom} compatibly with the inputs $(w,d)$ and outputs $(v,e)$:
\begin{align*}
    G = \bmtx G_{11} & G_{12} \\ G_{21} & G_{22}\emtx .
\end{align*}
Next, define an augmented system $\hat{G}$ from the nominal system $G$ and IQC filter $\Psi$ as follows:
\begin{align}
\label{eq:Ghat}
\hat{G} & := 
\bmtx \Psi & 0 \\ 0 & I_{n_e}\emtx 
\bmtx \bmtx G_{11} & G_{12}\\ I_m & 0 \emtx \\ \\ \bmtx G_{21} & G_{22} \emtx \emtx.
\end{align}
The augmented system has $(w,d)$ as inputs and $(r,e)$ as outputs as shown in  Figure~\ref{fig:AugLFT}. The IQC $(\Psi,M)$ defines a constraint on $r$ that is satisfied by any input/output pair $(v,w)$ of $\Delta_F$. We can use the LTI dynamics of the augmented system combined with this IQC to draw conclusions about the stability and performance of the original nonlinear system $F_U(G,\Delta_F)$.

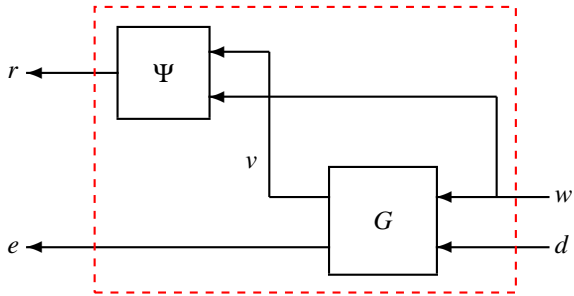
\begin{figure}[h!t]
\centering
\begin{picture}(240,112)(-62,-39)
 \thicklines
 % upper half
 \put(157,0){$w$}
 \put(135,40){\vector(-1,0){109}}  
% \put(109,40){\line(1,0){30}}  
 \put(40,13){$v$}
 \put(49,57){\vector(-1,0){23}}
% I/O for Repeated ReLU
% \put(80,75){\framebox(30,30){$\Phi$}}
% \put(75,25){\framebox(34,34){F}}
% \put(112,57){$\Phi$}
% \put(58,40){\line(0,1){45}}
% \put(58,85){\vector(1,0){90}}
% \put(125,40){\line(0,1){25}}
% \put(125,65){\vector(1,0){23}}
 \put(-8,32){\framebox(34,34){$\Psi$}}
 \put(-8,49){\vector(-1,0){35}}
 \put(-50,47){$r$}
 % lower half 
 \put(72,-27){\framebox(40,40){$G$}}
 % v
 \put(49,57){\line(0,-1){55}}
 \put(49,2){\line(1,0){23}}
 % w
 \put(135,40){\line(0,-1){38}}
 \put(155,2){\vector(-1,0){43}}
 % d
 \put(157,-19){$d$}
 \put(155,-17){\vector(-1,0){43}}
 %e
 \put(-50,-19){$e$}
 \put(72,-17){\vector(-1,0){115}}
 % dotted box
 \begin{tikzpicture}[overlay]
    \draw[red,thick,dashed](-0.6,-1.2) -- (5,-1.2) -- (5,2.6) -- (-0.6,2.6) -- cycle;;
 \end{tikzpicture}
\end{picture}
\caption{Augmented LTI system $\hat{G}$ mapping $(w,d)$ to $(r,e)$ based on the interconnection of the nominal system $G$ and IQC filter $\Psi$.}
\label{fig:AugLFT}
\end{figure}

The stability and performance condition is given in terms of a state-space model for $\hat{G}$:
\begin{align}
  \label{eq:LTIAugmented}
  \begin{split}
    \hat{x}(k+1) & = \hat{A} \, \hat{x}(k) + \hat{B}_1\,w(k) 
       +  \hat{B}_2 \, d(k) \\
    r(k) & = \hat{C}_1 \, \hat{x}(k) + \hat{D}_{11}\, w(k) + \hat{D}_{12} \,d(k) \\
    e(k) & = \hat{C}_2\, \hat{x}(k) + \hat{D}_{21}\, w(k) + \hat{D}_{22}\, d(k).
  \end{split}
\end{align}
We can express the state matrices of $\hat{G}$ in terms of the state matrices of the nominal model $G$ and IQC filter $\Psi$.
Explicit expressions are given in the appendix.  The state-dimension of the augmented system, denoted $n_{\hat{x}}$, is the sum of the state dimensions of $G$ and $\Psi$.

Next, define the following matrix function 
using the state matrices of $\hat{G}$:
\begin{align}
\label{eq:LiftedLMI}
\begin{split}
 L(P,M,\gamma^2) &:= \bmtx \hat{A}^\top P \hat{A}-P  & \hat{A}^\top P \hat{B}_1 
  &  \hat{A}^\top P \hat{B}_2 \\ 
  \hat{B}_1^\top P \hat{A} & \hat{B}_1^\top P \hat{B}_1 
  & \hat{B}_1^\top P \hat{B}_2 \\
  \hat{B}_2^\top P \hat{A} & \hat{B}_2^\top P \hat{B}_1 
& \hat{B}_2^\top P \hat{B}_2-\gamma^2 I\emtx \\
& + \bmtx \hat{C}_2^\top \\ \hat{D}_{21}^\top \\ \hat{D}_{22}^\top\emtx
  \bmtx \hat{C}_2^\top \\ \hat{D}_{21}^\top \\ \hat{D}_{22}^\top \emtx^\top
+  \bmtx \hat{C}_1^\top \\ \hat{D}_{11}^\top  \\ \hat{D}_{12}^\top \emtx
M
  \bmtx \hat{C}_1^\top \\ \hat{D}_{11}^\top  \\ \hat{D}_{12}^\top \emtx^\top
\end{split}
\end{align}
The lemma below provides a sufficient condition for internal stability and finite induced-$\ell_2$ gain of the interconnection $F_U(G,\Delta_F)$.

\vspace{0.1in}
\begin{lemma}
    \label{lem:stabGeneral}
    Consider the interconnection $F_U(G,\Delta_F)$ where
    $G$ is the LTI system \eqref{eq:LTInom} and $\Delta_F:\ell_2^m \to \ell_2^m$ is a static, memoryless nonlinearity that satisfies the time-domain IQC defined by $(\Psi,M)$.  Assume the interconnection is well-posed.
    Then the interconnection $F_U(G,\Delta_F)$ is internally stable and has $\|F_U(G,\Delta_F)\|_{2\to 2} < \gamma$ if there exists a 
    $P\succeq 0$ and  $\gamma \ge 0$ such that $L(P,M,\gamma^2) \prec 0$.
\end{lemma}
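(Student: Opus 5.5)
The plan is a standard dissipation-inequality argument. Define the storage function $V(\hat{x}) := \hat{x}^\top P \hat{x}$ on the state of the augmented system $\hat{G}$ in \eqref{eq:LTIAugmented}. Take any trajectory of $F_U(G,\Delta_F)$, which exists and is unique by well-posedness. Let $\hat{x}$ be the joint state of $G$ and $\Psi$, with $\Psi$ started from zero initial condition, and set $w=\Delta_F(v)$. Since $L(P,M,\gamma^2)\prec 0$ is a strict inequality on a finite-dimensional matrix, there is $\epsilon>0$ with $L(P,M,\gamma^2)\preceq -\epsilon I$. Multiply this on the left and right by $z(k):=[\hat{x}(k);\,w(k);\,d(k)]$. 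Using the state equations, this gives for every $k$
\begin{align*}
V(\hat{x}(k+1)) - V(\hat{x}(k)) + e(k)^\top e(k) - \gamma^2 d(k)^\top d(k) + r(k)^\top M r(k) \le -\epsilon \|z(k)\|^2 .
\end{align*}
Sum this from $k=0$ to $T_0$. The hard IQC (Definition~\ref{def:tdiqc}) gives $\sum_{k=0}^{T_0} r(k)^\top M r(k)\ge 0$, and we use it in the form $-\sum r^\top M r \le 0$. Together with $P\succeq 0$, this yields
\begin{align*}
\sum_{k=0}^{T_0} \left( e(k)^\top e(k) + \epsilon \|z(k)\|^2 \right) \le \gamma^2 \sum_{k=0}^{T_0} d(k)^\top d(k) + V(\hat{x}(0)) .
\end{align*}

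For the gain bound, take $x(0)=0$, so that $\hat{x}(0)=0$, and let $T_0\to\infty$. Since $\|z\|^2\ge \|d\|^2$, we get $\|e\|_2^2 \le (\gamma^2-\epsilon)\|d\|_2^2$. Hence $\|F_U(G,\Delta_F)\|_{2\to2}\le \sqrt{\gamma^2-\epsilon}<\gamma$. This strict inequality is exactly why the $\epsilon$ margin is kept instead of being discarded.

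For internal stability, set $d=0$ and let $x(0)$ be arbitrary. The filter state still starts at zero, so $\hat{x}(0)=(x(0),0)$ and $V(\hat{x}(0))$ is finite. The summed inequality gives $\epsilon\sum_{k=0}^{T_0}\|\hat{x}(k)\|^2 \le V(\hat{x}(0))$ for every $T_0$. Thus $\hat{x}\in\ell_2$, so $\hat{x}(k)\to 0$, and in particular $x(k)\to 0$.

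The one delicate point is that $P$ is only assumed positive semidefinite, so $V$ is not a Lyapunov function in the usual sense and we cannot use monotone decrease of $V$ directly. The plan avoids this by never needing $V$ to be positive definite. Stability comes from the $\epsilon\|\hat{x}\|^2$ term inherited from strictness of the LMI: the top-left block of $L$ being negative definite is what controls $\hat{x}$. We use $P\succeq 0$ only to drop $V(\hat{x}(T_0+1))\ge 0$ at the end of the sum.

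The second point to check is that the IQC is applied to the correct signal $r$, namely with zero filter initial conditions and the true $(v,w)$ pair. This is guaranteed by building $\hat{G}$ as in \eqref{eq:Ghat} with $\Psi$'s states initialized at zero, while only $G$'s states carry $x(0)$. Finally, hardness of the IQC is what lets us work on finite horizons $T_0$ without first knowing that $v,w\in\ell_2$.
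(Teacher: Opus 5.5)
Your proposal is correct and follows essentially the same dissipation argument as the paper: multiply the strict LMI by $[\hat{x}(k);\,w(k);\,d(k)]$, sum over a finite horizon, and discard $\sum r^\top M r\ge 0$ using the hard IQC. You then specialize to $d=0$ to get $\hat{x}\in\ell_2$, and to $x(0)=0$ to get the gain bound $\sqrt{\gamma^2-\epsilon}<\gamma$. The only cosmetic difference is that the paper first perturbs to $\hat{P}=P+\epsilon I\succ 0$, whereas you keep $P\succeq 0$ and take the margin directly from strictness of $L(P,M,\gamma^2)\prec 0$; this is equally valid, since only $V(\hat{x}(T_0+1))\ge 0$ is needed.
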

\vspace{0.1in}
\begin{proof} 
By well-posedness, for any $x(0) \in \R^{n_x}$ there is a unique solution $(x,v,e,w)$ to the interconnection $F_U(G,\Delta_F)$.  Moreover, the filter $\Psi$ driven by $(v,w)$ from zero initial conditions has a unique output $r$. Let $\hat{x}$ denote the corresponding state trajectory for the augmented system $\hat{G}$.

The matrix inequality $L(P,M,\gamma^2) \prec 0$ is strictly feasible. Hence it remains feasible under small perturbations: for a sufficiently small $\epsilon>0$ the perturbed matrix $\hat{P}:= P+\epsilon I \succ 0$ satisfies $L(\hat{P},M,\gamma^2) + \epsilon I \preceq 0$.

Define a storage function by $V\left(\hat{x}\right) := \hat{x}^\top \hat{P} \hat{x}$.  Left and right multiply the perturbed LMI by $[\hat{x}(k)^\top \; w(k)^\top \; d(k)^\top]$ and its transpose.  The result, applying the augmented dynamics~\eqref{eq:Ghat}, gives the following:
\begin{align*}
    \begin{split}  
    & V(\hat{x}(k+1)) - V(\hat{x}(k)) + \epsilon \hat{x}(k)^\top \hat{x}(k) + e(k)^\top e(k) \\
    & + r(k)^\top M r(k) \leq (\gamma^2-\epsilon)d(k)^\top d(k) - \epsilon w(k)^\top w(k)
    \end{split}
\end{align*}
Summing the inequality from $k = 0$ to an arbitrary time $k = T_0 \ge 0$ yields:
\begin{align*}
    & V(\hat{x}(T_0+1)) - V(\hat{x}(0)) + \epsilon \sum_{k=0}^{T_0} \hat{x}(k)^\top \hat{x}(k) + \sum_{k=0}^{T_0} e(k)^\top e(k) \\
    & + \sum_{k=0}^{T_0} r(k)^\top M r(k) \leq
    \sum_{k=0}^{T_0}
    (\gamma^2-\epsilon)  d(k)^\top d(k) -\epsilon \sum_{k=0}^{T_0} w(k)^\top w(k)
\end{align*}
The first term is nonnegative since $\hat{P}\succ 0$. The 
sum involving $r(k)^\top M r(k)$
is also nonnegative because $\Delta_F$ satisfies the
IQC  defined by $(\Psi,M)$.  Hence this implies
\begin{align}
\label{eq:sumDissipation}
    \begin{split}
        & \epsilon \sum_{k=0}^{T_0} \hat{x}(k)^\top \hat{x}(k) + \sum_{k=0}^{T_0} e(k)^\top e(k) \\
        & \leq V(\hat{x}(0)) + (\gamma^2-\epsilon) \sum_{k=0}^{T_0} d(k)^\top d(k)
    \end{split}
\end{align}
Note that the inequality still holds even though the term $-\epsilon \sum_{k=0}^{T_0} w(k)^\top w(k) \le 0$ has been dropped.

First consider $d(k)=0$ for all $k$. Then \eqref{eq:sumDissipation} simplifies to the following inequality:
\begin{align*}
    \epsilon \sum_{k=0}^{T_0} \hat{x}(k)^\top \hat{x}(k) \le V(\hat{x}(0))
\end{align*}
This bound holds for any $T_0 \ge 0$. Take the limit as $T_0\to \infty$ to conclude that $\| \hat{x}\|_2^2 \le \frac{1}{\epsilon} V(\hat{x}(0))$.  It follows that the $\ell_2$-norm of $\hat{x}$ is bounded since $V(\hat{x}(0))$ is finite. This implies $\hat{x}(k) \to 0$ as $k\to \infty$ and hence $x(k) \to 0$ as well. We conclude that the interconnection $F_U(G,\Delta_F)$ is internally stable.

Now assume $x(0)=0$ and $d \in \ell_{2}$. Inequality \eqref{eq:sumDissipation}
combined with $V(\hat{x}(0))=0$
yields:
\begin{align*}
    \sum_{k=0}^{T_0} e(k)^\top e(k) \leq (\gamma^2-\epsilon) \sum_{k=0}^{T_0} d(k)^\top d(k)
\end{align*}
Again, take the limit as $T_0\to \infty$ to conclude that $\| e\|_2^2 \le (\gamma^2-\epsilon) \| d\|_2^2$. 
We conclude that  the interconnection has finite induced-$\ell_2$ gain with $\|F_U(G,\Delta_F)\|_{2\to 2} < \gamma$.
\end{proof}

%--------------------------------------------------------------------------------
\section{Main Results}
In this subsection, we first review dynamic IQCs for repeated slope-restricted nonlinearities, which serve as the starting point for the results developed in the following subsections. We then consider the repeated ReLU nonlinearity $\Phi:\R^{m} \to \R_{\ge 0}^{m}$ where $\phi:\R\to\R_{\ge 0}$ denotes the scalar ReLU, and derive a new dynamic IQC for this nonlinearity.

\subsection{IQCs for Repeated Slope Restricted Nonlinearities}
\label{sec:SlopeResIQCs}

In this subsection, we review a known dynamic IQC for repeated, slope restricted functions.  Let $f:\R \to \R$ be a scalar function with $f(0)=0$. Moreover, assume $f$ has slope restricted to $[0,1]$. This corresponds to the following constraint:
\begin{align}
  0 \le  \frac{ f(v_2)-f(v_1)}{v_2-v_1} \le 1, 
  \,\, \forall v_1, \, v_2 \in \R, \, v_1\ne v_2.
\end{align}
The repeated nonlinearity $F:\R^m \to \R^m$ is defined by $w=F(v)$
with elementwise application of the scalar function $f$.
The next lemma defines a standard static QC for this class of  repeated slope restricted functions. 

\vspace{0.1in}
\begin{lemma}
\label{lem:doublyhypQC}
Let $F:\R^{m} \to \R^{m}$ be a repeated nonlinearity defined element-wise by $f: \R \to \R$ where $f(0)=0$ and $f$ is slope restricted to $[0,1]$. Moreover, let $Q_0 \in \R^{m \times m}$ be any doubly hyperdominant matrix and define
\begin{align}
    M:= \bmtx 0 & Q_0^\top \\ Q_0 & -(Q_0+Q_0^\top) \emtx.
\end{align}
Then the following QC holds  $\forall v\in\R^{m}$ and $w=F(v)$:
\begin{align}
\label{eq:doublyhypQC}
    \bmtx v \\ w \emtx^\top
     M   
    \bmtx v \\ w \emtx \ge 0
\end{align}
\end{lemma}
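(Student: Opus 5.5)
The plan is to expand the quadratic form and decompose $Q_0$ into elementary nonnegative combinations whose QCs hold pairwise. Writing out \eqref{eq:doublyhypQC} with the given $M$ gives
\begin{align*}
\bmtx v \\ w \emtx^\top M \bmtx v \\ w \emtx = 2\, w^\top Q_0 v - 2\, w^\top Q_0 w = 2\, w^\top Q_0 (v-w),
\end{align*}
where we used $v^\top Q_0^\top w = w^\top Q_0 v$ and $w^\top (Q_0+Q_0^\top) w = 2 w^\top Q_0 w$. It therefore suffices to show $\sum_{i,j} (Q_0)_{ij}\, w_i (v_j-w_j) \ge 0$, which is linear in the entries of $Q_0$.

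Next I will write any doubly hyperdominant $Q_0$ as a nonnegative combination of three types of matrices. Let $a_i := \sum_j (Q_0)_{ij}\ge 0$ (row sums) and $b_j:=\sum_i (Q_0)_{ij}\ge0$ (column sums), and $c_{ij}:=-(Q_0)_{ij}\ge 0$ for $i\ne j$. A direct check shows
\begin{align*}
Q_0 = \sum_{i\ne j} c_{ij} (e_i - e_j)(e_i-e_j)^\top\cdot\tfrac12\cdot 0 + \ldots
\end{align*}
so rather than fuss with a symmetric splitting, I will simply regroup the sum entrywise: using $(Q_0)_{ii} = a_i + \sum_{j\ne i} c_{ij}$,
\begin{align*}
w^\top Q_0 (v-w) = \sum_i a_i\, w_i (v_i - w_i) + \sum_{i\ne j} c_{ij}\, w_i\big[(v_i-w_i) - (v_j - w_j)\big].
\end{align*}
Equivalently one could use column sums; only the row-sum version is needed here, though I expect the column-sum condition is what makes the symmetric form with $Q_0^\top$ natural in later dynamic versions.

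The scalar facts required are then: (i) $w_i(v_i-w_i)\ge 0$, i.e.\ the sector $[0,1]$ condition, which follows from slope restriction with $v_2=v_i$, $v_1=0$ and $f(0)=0$; and (ii) for each pair, $w_i[(v_i-w_i)-(v_j-w_j)]\ge 0$ is \emph{not} true termwise, so this is the main obstacle. I will instead symmetrize over the pair $(i,j)$ and $(j,i)$: it is not guaranteed $c_{ij}=c_{ji}$, so I will split $c_{ij}$ into its contribution and absorb asymmetry using both the row and column sum conditions. Concretely, I will aim to use the pairwise inequality
\begin{align*}
(w_i - w_j)\big[(v_i - w_i) - (v_j-w_j)\big] \ge 0,
\end{align*}
which follows from slope restriction applied to $v_i,v_j$ (the function $v\mapsto v-f(v)$ is nondecreasing, and $f$ is nondecreasing, so both differences share the sign of $v_i-v_j$), together with the one-sided terms $w_i(v_j-w_j)\ge 0$-type bounds coming from the sector condition combined with $w_i\ge0$ or $\le0$ matching signs. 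The delicate step is to show that the leftover asymmetric part $\sum_{i\ne j} c_{ij} w_i(\cdot)$ minus its symmetrization is dominated by $\sum_i a_i w_i(v_i-w_i)$ plus the column-sum surplus; I would first try the classical Willems/Zames–Falb argument, which writes $\sum_i w_i \big(\sum_j (Q_0)_{ij} g_j\big)$ with $g=v-w$ and uses the rearrangement-type inequality $w_i g_j \le w_i g_i$ after noting $w$ and $g$ are similarly ordered functions of $v$, reducing everything to row and column sums being nonnegative.
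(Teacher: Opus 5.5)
You correctly reduce the QC to $w^\top Q_0 g \ge 0$ with $g := v-w$. The row-sum regrouping and the pairwise fact $(w_i-w_j)(g_i-g_j)\ge 0$ are also correct, and together they settle the case of symmetric $Q_0$.

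\textbf{The gap.} The non-symmetric case, which is the real content of the lemma, is never proved. Once you symmetrize, the remainder is $\sum_{i<j} d_{ij}(w_i+w_j)(g_i-g_j)$ with $d_{ij}=\frac12(c_{ij}-c_{ji})$. This term is sign-indefinite, and you only assert that the row/column surplus dominates it.

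The tool you fall back on at the end, a termwise bound $w_i g_j \le w_i g_i$, is exactly the claim you correctly rejected in (ii). It is false for general $f$: with $f(v)=v/2$, $v_i=1$, $v_j=2$ one has $w_i g_j=\frac12>\frac14=w_i g_i$. It does hold for ReLU, where $w_i g_j\le 0=w_i g_i$, but the lemma covers every $f$ slope-restricted to $[0,1]$.

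Your remark that only the row-sum condition is needed is also wrong. The matrix $Q_0=\bsmtx 1 & -1 \\ 0 & 0\esmtx$ has nonpositive off-diagonal entries and zero row sums. Yet with $f(v)=v/2$ and $v=(1,3)$ it gives $w^\top Q_0 g=-\frac12$. So any proof must use the column sums, and your regrouping by rows has no place for them.

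\textbf{The missing idea} is a pair of nonnegative potentials that charge each cross term $w_i g_j$ partly to row $i$ and partly to column $j$. Put $h(t):=t-f(t)$; then $f$ and $h$ are both nondecreasing, $1$-Lipschitz, and vanish at $0$. Define $\alpha(t):=\int_0^t h(s)f'(s)\,ds$ and $\beta(t):=\int_0^t f(s)h'(s)\,ds$.
\begin{itemize}
\item Both are nonnegative, since each integrand has the sign of $s$.
\item Integration by parts gives $f(t)h(t)=\alpha(t)+\beta(t)$.
\item Monotonicity of $f$ gives $f(s)(h(t)-h(s))\le\int_s^t f(r)h'(r)\,dr=\beta(t)-\beta(s)$ for all $s,t$. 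Hence $w_i g_j\le \alpha(v_i)+\beta(v_j)$, with equality when $i=j$.
\end{itemize}
Because $(Q_0)_{ij}\le 0$ for $i\ne j$, this yields $w^\top Q_0 g\ge\sum_{i,j}(Q_0)_{ij}(\alpha(v_i)+\beta(v_j))=\sum_i a_i\alpha(v_i)+\sum_j b_j\beta(v_j)\ge 0$. This is exactly where both the row-sum and column-sum conditions enter.

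Alternatively, your rearrangement instinct can be rescued in summed rather than termwise form:
\begin{itemize}
\item Write $Q_0=\lambda I-B$ with $\lambda>0$ and $\lambda\ge\max_i(Q_0)_{ii}$. Then $B\ge 0$ entrywise with row and column sums at most $\lambda$.
\item Embed $B/\lambda$ as the leading block of a $2m\times 2m$ doubly stochastic matrix.
\item Pad $w$ and $g$ with $m$ zeros. These are the values at $v=0$, since $f(0)=0$, so the padded vectors stay similarly ordered.
\item Decompose by Birkhoff--von Neumann and apply $\sum_i w_i g_{\sigma(i)}\le\sum_i w_i g_i$ to each permutation.
\end{itemize}
The paper itself only cites Willems for this lemma, so you have to supply one of these arguments yourself.
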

\vspace{0.1in}
\begin{proof}
This lemma follows from the results in Section 3.5 of \cite{Willems:71}.
\end{proof}
\vspace{0.1in}

The corresponding operator $\Delta_F:\ell_{2e}^m\to \ell_{2e}^m$ is again defined by application of $F$ pointwise in time. Next, we define a dynamic IQC for $\Delta_F$ using the following dynamic filter.
\begin{align}
    \label{eq:FiniteFilter}
    &\Psi_N(z) := 
\\
\nonumber
    & \left[\begin{array}{cccc|cccc}
        I_{m} & z^{-1}I_{m}  & \cdots & z^{-N}I_{m} & 0 & 0 & \cdots & 0\\
        \hline
        0 & 0 & \cdots & 0 & I_{m} & z^{-1}I_{m} & \cdots & z^{-N}I_{m}
        \end{array}\right]^\top.
\end{align}
This finite impulse response filter has dimension
$2m(N+1)\times 2m$. It
has been used previously for numerical implementations of discrete-time Zames--Falb IQCs \cite{carrasco19}. Exciting this filter with an input sequence $\bsmtx v \\ w \esmtx$ from zero initial conditions gives the following output at time $k$:
\begin{align}
\label{eq:rN}
    r(k) := 
    &\bmtx v(k)^\top, \ldots, v(k-N)^\top, \, w(k)^\top, \ldots, w(k-N)^\top \emtx^\top %\in \R^{(2N+2)m}.
\end{align}
The assumption of zero initial conditions implies that $v(j)=0$ and $w(j)=0$ for all $j<0$.

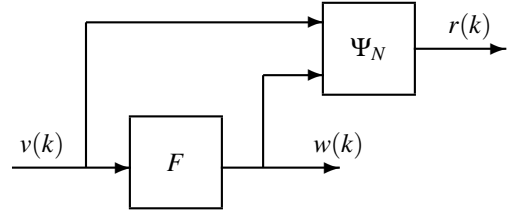
\begin{figure}[h!t]
\centering
\begin{picture}(240,75)(0,20)
 \thicklines
 \put(144,46){$w(k)$}
 \put(109,40){\vector(1,0){45}}  
 \put(33,46){$v(k)$}
 \put(30,40){\vector(1,0){45}}  
% I/O for Repeated ReLU
% \put(80,75){\framebox(30,30){$\Phi$}}
 \put(75,25){\framebox(34,34){$F$}}
%\put(112,57){$\Phi$}
 \put(58,40){\line(0,1){55}}
 \put(58,95){\vector(1,0){90}}
 \put(125,40){\line(0,1){35}}
 \put(125,75){\vector(1,0){23}}
 \put(148,68){\framebox(34,34){$\Psi_N$}}
 \put(182,85){\vector(1,0){35}}
 \put(195,91){$r(k)$}
\end{picture}
\caption{Time domain graphical interpretation for filter $\Psi_N$.}
\label{fig:filter}
\end{figure}

The next lemma is a known result with similar statements appearing in \cite{fetzer17, zhang2021lyapunov}.  The proof is given as it provides the starting point for our main result in the next section.

\vspace{0.1in}
\begin{lemma}
\label{lem:dynIQCSlopeMIMO}
Let $\Delta_F:\ell_2^m \to \ell_2^m$ be a repeated nonlinearity defined by the scalar function $f: \R \to \R$, where $f(0)=0$ and $f$ has slope restricted to $[0,1]$.   Moreover, let $\{Q_i\}_{i=-N}^N \subset \R^{m \times m}$ be given and define
\begin{align*}
    M_{row}& :=\bmtx Q_{-N} & \cdots & Q_{-1} & Q_0 & Q_1 & \cdots & Q_N \emtx \\
    M_{col}& :=\bmtx Q_{-N}^\top & \cdots & Q_{-1}^\top & Q_0^\top & Q_1^\top & \cdots & Q_N^\top \emtx ^\top.
\end{align*}
Assume: (i) all entries of $Q_i$ ($i\ne 0$) are non-positive, (ii) all off-diagonal entries of $Q_0$ are non-positive, (iii) all rows sums of $M_{row}$ are non-negative, and (iv) all column sums of $M_{col}$ are non-negative.  Then  $\Delta_F$ satisfies the time-domain IQC defined by $(\Psi_N,M)$ where
\begin{align}
    \label{eq:M_SlopeResMIMO}
    M := \bmtx 0 & M_0^\top \\ M_0 & -(M_0+M_0^\top) \emtx,
\end{align}
with
\begin{align}
    \label{eq:M0MIMO}
    M_0 := 
    \bmtx Q_0 & Q_1 & \cdots & Q_N \\
    Q_{-1} & 0 & \cdots & 0\\
    \vdots \\
    Q_{-N} & 0 & \cdots & 0 \emtx.
\end{align}
\end{lemma}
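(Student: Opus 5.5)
The plan is to reduce the dynamic IQC to a family of static QCs from Lemma~\ref{lem:doublyhypQC}, applied to the stacked nonlinearity on a time window. First expand the quadratic form. With $r(k)$ as in \eqref{eq:rN}, write $V_k := [v(k)^\top,\ldots,v(k-N)^\top]^\top$ and $W_k$ similarly. Then $r(k)^\top M r(k) = 2 W_k^\top M_0 (V_k - W_k)$. Because of the sparsity pattern of $M_0$ in \eqref{eq:M0MIMO}, this equals
\begin{align*}
2\sum_{i=0}^{N} \Big( w(k)^\top Q_i\,(v-w)(k-i) + w(k-i)^\top Q_{-i}\,(v-w)(k) \Big),
\end{align*}
where the $i=0$ term is counted once, i.e.\ it is $2w(k)^\top Q_0 (v-w)(k)$. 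Each product couples the current time $k$ with a lag $k-i$, $0\le i\le N$.

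Next, sum over $k=0,\ldots,T_0$ and reorganize as a single quadratic form in the stacked vectors $\mathbf{v} := [v(0);\ldots;v(T_0)]$ and $\mathbf{w} := [w(0);\ldots;w(T_0)]$. Terms with negative time indices vanish because $v(j)=w(j)=0$ for $j<0$. The result is
\begin{align*}
\sum_{k=0}^{T_0} r(k)^\top M r(k) = 2\,\mathbf{w}^\top \mathbf{H}\,(\mathbf{v}-\mathbf{w}),
\end{align*}
where $\mathbf{H}\in\R^{m(T_0+1)\times m(T_0+1)}$ is a truncated banded block-Toeplitz matrix. Its block diagonal is $Q_0$, and the blocks at offset $\pm i$ are $Q_{\pm i}$ placed appropriately. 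The exact transpose/orientation must be tracked carefully, and I would fix it by matching with the expansion above. Since $2\mathbf{w}^\top \mathbf{H}(\mathbf{v}-\mathbf{w})$ is exactly the quadratic form of Lemma~\ref{lem:doublyhypQC} with $Q_0$ replaced by $\mathbf{H}$, and $\mathbf{w}=F_{\text{stack}}(\mathbf{v})$ is a repeated application of $f$ over $m(T_0+1)$ coordinates, it suffices to show that $\mathbf{H}$ is doubly hyperdominant.

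Off-diagonal nonpositivity of $\mathbf{H}$ follows directly from assumptions (i) and (ii). The row and column sums are the main obstacle. An interior block row of $\mathbf{H}$ contains all of $Q_{-N},\ldots,Q_N$, so its scalar row sums equal the row sums of $M_{row}$, which are nonnegative by (iii). Block rows near the boundary ($k<N$ or $k>T_0-N$) are truncated, so they omit some $Q_i$ with $i\neq 0$. Because those omitted entries are nonpositive, dropping them only increases the row sum, and the boundary row sums stay nonnegative. Columns are handled the same way using (iv) and $M_{col}$.

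The delicate point is verifying that, after truncation, every omitted entry really comes from some $Q_i$ with $i\ne0$ and never from $Q_0$. This holds since $Q_0$ sits only on the block diagonal. Having established that $\mathbf{H}$ is doubly hyperdominant, Lemma~\ref{lem:doublyhypQC} gives nonnegativity of the sum for every $T_0\ge 0$. This is exactly the hard IQC \eqref{eq:tdhardiqc}.
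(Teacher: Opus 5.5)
Your proposal is correct and follows essentially the same route as the paper. Both rewrite $\sum_{k=0}^{T_0} r(k)^\top M r(k)$ as the static QC of Lemma~\ref{lem:doublyhypQC} for the stacked signals with a banded block-Toeplitz matrix and check that this matrix is doubly hyperdominant. Your $\mathbf{H}$, with $\mathbf{H}_{t,s}=Q_{t-s}$ in forward time order, is the paper's $\bar{Q}_{T_0}$ written in reversed time order. Your explicit argument that truncated boundary rows and columns only lose nonpositive blocks $Q_i$, $i\neq 0$, supplies the step the paper merely asserts.
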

\vspace{0.1in}
\begin{proof}
For any $T_0\ge 0$, define the following stacked vectors of the signals $v$ and $w=\Delta_F(v)$ from $t=0$ through $t=T_0$,
\begin{align}
    \label{eq:stackvecs1}
    \begin{split}
        \bar{v}_{T_0} & := \bmtx v(T_0)^\top & v(T_0-1)^\top & \cdots & v(0)^\top \emtx^\top \in \R^{(T_0+1)m}, \\
        \bar{w}_{T_0} & := \bmtx w(T_0)^\top & w(T_0-1)^\top & \cdots & w(0)^\top \emtx^\top \in \R^{(T_0+1)m}.
    \end{split}
\end{align}

Next, define a block-Toeplitz matrix $\bar{Q}_{T_0} \in \R^{((T_0+1)m)\times ((T_0+1)m)}$ formed from $\{Q_i\}_{i=-N}^N$.
If $T_0\le N$ then this matrix is
\begin{align*}
    \bar{Q}_{T_0} := 
    \bmtx Q_0 & Q_1 & \cdots & Q_{T_0} \\
    Q_{-1} & Q_0 & \cdots & Q_{T_0-1} \\
    \vdots & &  \ddots \\
    Q_{-T_0} & Q_{-T_0+1} & \cdots & Q_0 \emtx, 
\end{align*}
If $T_0>N$ then the first block row and column of $\bar{Q}_{T_0}$ are padded with $T_0-N$ blocks of $m\times m$ zero matrices, e.g., the first block row is: 
\begin{align*}
    \bmtx 
    Q_0 & Q_1 & \cdots & Q_N & 0 &\cdots & 0 \emtx \in \R^{m\times ((T_0+1)m)}, 
\end{align*}
In any case, the assumptions on $\{Q_i\}_{i=-N}^N$ imply that $\bar{Q}_{T_0}$ is a doubly hyperdominant matrix for any $T_0\ge 0$.

\begin{comment}
Next, define a block-Toeplitz matrix $\bar{Q}_{T_0} \in \R^{((T_0+1)m)\times ((T_0+1)m)}$ with the first $m$ rows and $m$ columns defined by:
\begin{align*}
    &\bar{Q}_{T_0}(1:m,:) \\
    &:= \bmtx 
    Q_0 & Q_1 & \cdots & Q_N & 0 &\cdots & 0 \emtx \in \R^{m\times ((T_0+1)m)} \\
    &\bar{Q}_{T_0}(:,1:m) \\
    &:=\bmtx Q_0^\top & Q_{-1}^\top & \cdots & Q_{-N}^\top & 0 & \cdots & 0\emtx^\top \in \R^{((T_0+1)m)\times m}
\end{align*}
\svn{Is the definition above clear enough or should I change it? As we discussed in the meeting I can define something like BlockToeplitz(row,column)=the big matrix I commented out}
\begin{align*}
\bar{Q}_{T_0} := \bmtx 
    Q_0 & Q_1 & \cdots & Q_N & 0 &\cdots & 0 \\
    Q_{-1} & Q_0 & \cdots & Q_{N-1} & Q_N &\cdots & 0 \\
    \vdots & \ddots & \ddots & & & \ddots \\
    Q_{-N} & \cdots & \cdots & Q_0 & Q_1 &\cdots & 0 \\
    0 & Q_{-N} & \cdots & Q_{-1} & Q_0 &\cdots & 0 \\
    \vdots & & \ddots & & \ddots & \ddots &  \\
    0 & 0 & \cdots & 0 & 0 & \cdots & Q_0
    \emtx.
\end{align*}
If $T_0\le N$ then the first block $mT_0$ rows and columns of $\bar{Q}_{T_0}$ are defined using only the sub-matrices $\{Q_i\}_{i=-T_0}^{T_0}$ (and the additional zeros will not appear).  In any case, the assumptions on $\{Q_i\}_{i=-N}^N$ imply that $\bar{Q}_{T_0}$ is a doubly hyperdominant matrix for any $T_0\ge 0$.
\end{comment}

Finally, it can be shown, by direct substitution, that the
output $r$ of $\Psi_N$ driven by $(v,w)$ from zero
initial conditions satisfies the following for any
$T_0\ge 0$: 
\begin{align*}
  \sum_{k=0}^{T_0} r(k)^\top M r(k)  
  = \bmtx \bar{v}_{T_0}  \\ \bar{w}_{T_0} \emtx^\top
    \bmtx 0 & \bar{Q}_{T_0}^\top \\ \bar{Q}_{T_0} & -(\bar{Q}_{T_0}+\bar{Q}_{T_0}^\top) \emtx
    \bmtx \bar{v}_{T_0}  \\ \bar{w}_{T_0} \emtx
\end{align*}
This is nonnegative for any $T_0\ge 0$ by Lemma~\ref{lem:doublyhypQC}.
%from the results in Section 3.5 of \cite{Willems:71}.
\end{proof}
\vspace{0.1in}

We now establish a connection between Lemma~\ref{lem:dynIQCSlopeMIMO} and prior results in the literature, including static QCs \cite{Willems:71, Willems:1968} and dynamic IQCs for scalar slope-restricted nonlinearities \cite{megretski97, Carrasco:2016, fetzer17, carrasco19}. If $N = 0$, then Lemma~\ref{lem:dynIQCSlopeMIMO} reduces to the static QC formulation in Lemma~\ref{lem:doublyhypQC}. 
If $m = 1$, then all matrices $\{Q_i\}_{i=-N}^N  \subset \R^{m \times m}$ in Lemma~\ref{lem:dynIQCSlopeMIMO} are simply scalars $\{q_i\}_{i=-N}^N$. This case corresponds to the results in \cite{carrasco19}.

\subsection{Integral Quadratic Constraints for ReLU}
\label{sec:ReLUIQCs}
The next two lemmas describe the static QC and dynamic IQC for repeated ReLU, respectively.
\vspace{0.1in}
\begin{lemma}
    \label{lem:staticQCsReLU}
    Let $\Phi:\R^{m} \to \R^{m}$ be the repeated ReLU where $\phi: \R \to \R$ is the scalar ReLU.
    Let $Q_1=Q_1^\top$, $Q_2=Q_2^\top\in \R_{\ge 0}^{m \times m}$, and $Q_3\in \R^{m \times m}$ be given with $Q_3$ a Metzler matrix. Define
    \begin{align}
        M:=  \bmtx Q_1 & -Q^\top_3-Q_1 \\ -Q_3-Q_1 & Q_1+Q_2+Q_3+Q_3^\top\emtx.
    \end{align}
    Then the following QC holds  $\forall v\in\R^{m}$ and $w=\Phi(v)$:
    \begin{align}
    \label{eq:repReLUFin}
    \bmtx v \\ w\emtx^\top M
    \bmtx v \\ w\emtx \ge 0
    \end{align}
\end{lemma}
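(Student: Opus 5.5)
The plan is to expand the quadratic form and regroup it around the quantity $w-v$. Write $z:=w-v$, so that $z_i=\phi(v_i)-v_i=\phi(-v_i)\ge 0$ for each $i$. Direct expansion of the quadratic form in \eqref{eq:repReLUFin} with the given $M$ should give
\begin{align*}
\bmtx v \\ w\emtx^\top M \bmtx v \\ w\emtx
 = (w-v)^\top Q_1 (w-v) + w^\top Q_2 w + 2\, w^\top Q_3 (w-v).
\end{align*}
Checking this identity is the first step. The $Q_1$ blocks $\bsmtx Q_1 & -Q_1 \\ -Q_1 & Q_1\esmtx$ form exactly $(w-v)^\top Q_1 (w-v)$, and the $Q_2$ block sits in the $(2,2)$ position. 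The $Q_3$ terms contribute $-v^\top Q_3^\top w - w^\top Q_3 v + w^\top (Q_3+Q_3^\top) w = 2w^\top Q_3 w - 2 w^\top Q_3 v = 2 w^\top Q_3 (w-v)$.

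The second step is to show that each of the three terms is nonnegative, using the elementwise sign structure of ReLU.

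The $Q_1$ term: since $z=w-v\in\R^m_{\ge 0}$ and $Q_1$ has nonnegative entries, $z^\top Q_1 z=\sum_{i,j}(Q_1)_{ij} z_i z_j\ge 0$.

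The $Q_2$ term: $w=\Phi(v)\in\R^m_{\ge 0}$ and $Q_2$ is entrywise nonnegative, so $w^\top Q_2 w\ge 0$.

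The $Q_3$ term: write $w^\top Q_3 z=\sum_i (Q_3)_{ii} w_i z_i + \sum_{i\ne j}(Q_3)_{ij} w_i z_j$. The diagonal part vanishes by the complementarity property of ReLU, namely $w_i z_i=\phi(v_i)\phi(-v_i)=0$ for each $i$. This is what allows the diagonal of $Q_3$ to have arbitrary sign. The off-diagonal part is nonnegative because $(Q_3)_{ij}\ge 0$ for $i\ne j$ (the Metzler assumption) and $w_i,z_j\ge 0$.

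Summing the three nonnegative terms yields \eqref{eq:repReLUFin}. The only step needing care is the algebraic regrouping in the first step, in particular tracking the transposes in the off-diagonal blocks $-Q_3^\top-Q_1$ and $-Q_3-Q_1$. The $Q_3$ term is the conceptually key part, since it is where the ReLU-specific complementarity $w_i(w_i-v_i)=0$ enters, beyond what slope restriction alone provides. Symmetry of $Q_1$ and $Q_2$ is used only so that $M\in\Sym^{2m}$.
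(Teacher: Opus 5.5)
Your proof is correct. The regrouping into $(w-v)^\top Q_1 (w-v) + w^\top Q_2 w + 2\,w^\top Q_3 (w-v)$ checks out, and the three terms are nonnegative because $w\ge 0$, $w-v\ge 0$, and the complementarity $w_i(w_i-v_i)=0$ cancels the sign-indefinite diagonal of $Q_3$.

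The paper gives no argument of its own here; it defers to the earlier ReLU QC works it cites. Those works rest on exactly these three properties, so your write-up is essentially that standard argument made self-contained.
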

\vspace{0.1in}
\begin{proof}
This follows from the QC results for ReLU in  \cite{richardson23, noori24ReLURNN, ebihara21}.
\end{proof}
\vspace{0.1in}

The next lemma is the primary contribution of this paper. It generalizes the dynamic IQCs previously developed for scalar ReLU \cite{noori2025IQCs} to case of repeated ReLU.

\vspace{0.1in}
\begin{lemma}
\label{lem:dynIQCReLUMIMO}
Let $\Delta_\Phi:\ell_2^m \to \ell_2^m$ be a repeated nonlinearity defined by the scalar ReLU $\phi: \R \to \R$.   Moreover, let $\{Q^1_i\}_{i=0}^N, \{Q^2_i\}_{i=0}^N, \{Q^3_i\}_{i=-N}^{-1}, \{Q^3_i\}_{i=1}^N \subset \R_{\ge 0}^{m \times m}$ be given with $Q^3_0 \in \R^{m \times m}$ a Metzler matrix. Then $\Delta_\Phi$ satisfies the time-domain hard IQC defined by $(\Psi_N,M)$ where
\begin{align}
\label{eq:M_ReLUMIMO}
   M:= \bmtx M_1 & -M^\top_3-M_1 \\ -M_3-M_1 & M_1+M_2+M_3+M_3^\top\emtx,
\end{align}
with
\begin{align}
    \label{eq:M12MIMO}
    M_j := 
    \bmtx Q^j_0 & Q^j_1 & \cdots & Q^j_N \\
    Q^j_1 & 0 & \cdots & 0\\
    \vdots \\
    Q^j_N & 0 & \cdots & 0 \emtx, \;\; j=1,2,
\end{align}
\begin{align}
    \label{eq:M3MIMO}
    M_3 := 
    \bmtx Q^3_0 & Q^3_1 & \cdots & Q^3_N \\
    Q^3_{-1} & 0 & \cdots & 0\\
    \vdots \\
    Q^3_{-N} & 0 & \cdots & 0 \emtx.
\end{align}

\end{lemma}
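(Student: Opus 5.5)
The plan follows the proof of Lemma~\ref{lem:dynIQCSlopeMIMO}: lift the sum over time into one static QC on stacked vectors, then invoke the static ReLU QC of Lemma~\ref{lem:staticQCsReLU}, applied to the repeated ReLU of dimension $(T_0+1)m$.

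Fix $T_0\ge 0$ and define $\bar v_{T_0}$, $\bar w_{T_0}\in\R^{(T_0+1)m}$ as in \eqref{eq:stackvecs1}. Because ReLU acts elementwise and pointwise in time, $\bar w_{T_0}=\bar\Phi(\bar v_{T_0})$, where $\bar\Phi$ is the repeated ReLU on $\R^{(T_0+1)m}$. For $j=1,2,3$, build block-Toeplitz matrices $\bar Q^j_{T_0}$ from the given blocks, padding with zeros when $T_0>N$ exactly as in the proof of Lemma~\ref{lem:dynIQCSlopeMIMO}. For $j=1,2$ set $Q^j_{-i}:=Q^j_i$, which matches the symmetric structure of $M_1$ and $M_2$ in \eqref{eq:M12MIMO}.

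The key identity to verify by direct substitution of $r(k)$ from \eqref{eq:rN} is
\begin{align*}
\sum_{k=0}^{T_0} r(k)^\top M r(k) = \bmtx \bar v_{T_0}\\ \bar w_{T_0}\emtx^\top \bmtx \bar Q^1 & -\bar Q^{3\top}-\bar Q^1\\ -\bar Q^3-\bar Q^1 & \bar Q^1+\bar Q^2+\bar Q^3+\bar Q^{3\top}\emtx \bmtx \bar v_{T_0}\\ \bar w_{T_0}\emtx .
\end{align*}
Here $\bar Q^j$ abbreviates $\bar Q^j_{T_0}$. The identity holds because each bilinear term $a(k)^\top M_j\, b(k)$ in the sum has a specific shape. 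Here $a$ and $b$ are each $v$ or $w$ stacked as in $r(k)$. The sparse "first block row/column" form of $M_j$ makes each such term expand to
\begin{align*}
\sum_i a(k)^\top Q^j_i\, b(k-i)+\sum_{i\ge1} a(k-i)^\top Q^j_{-i}\, b(k) .
\end{align*}
Summing over $k$ with zero initial conditions then collects each lag-$i$ block exactly once, which produces $\bar a^\top \bar Q^j \bar b$. The same bookkeeping already underlies Lemma~\ref{lem:dynIQCSlopeMIMO}. I expect this index bookkeeping, in particular matching the transposes in the cross terms $-M_3^\top-M_1$ against the Toeplitz orientation of $\bar Q^3$, to be the main (if routine) obstacle. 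It is best handled by checking one lag $i>0$ term explicitly.

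Next, check the hypotheses of Lemma~\ref{lem:staticQCsReLU} for the lifted matrices.
\begin{itemize}
\item[(a)] $\bar Q^1$ and $\bar Q^2$ are symmetric, since $Q^j_{-i}=Q^j_i$ and, reading the symmetric-$M_j$ form, the $Q^j_i$ must be symmetric; in particular $Q^j_0$ must be symmetric for $M$ to lie in $\Sym^{2m(N+1)}$, which I take as implicit in the statement.
\item[(b)] $\bar Q^1$ and $\bar Q^2$ have nonnegative entries, since every block is in $\R^{m\times m}_{\ge0}$ or zero.
\item[(c)] $\bar Q^3$ is Metzler. Its off-diagonal entries are either off-diagonal entries of the Metzler block $Q^3_0$, entries of the nonnegative blocks $Q^3_{\pm i}$, or zeros.
\end{itemize}
Lemma~\ref{lem:staticQCsReLU} then gives nonnegativity of the right-hand side for every $T_0$, which is the hard IQC.

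If symmetry of the $Q^j_i$ for $i\ge1$ is not assumed, I would fall back on symmetrizing. Replace $\bar Q^j$ by $\tfrac12(\bar Q^j+\bar Q^{j\top})$, which leaves the quadratic form in the $(1,1)$ and $(2,2)$ blocks unchanged. The cross terms then need care, so I would instead absorb the $M_1$ contributions using the underlying scalar facts: $\phi(v)\ge0$, $\phi(v)-v\ge0$, and $\phi(v)(\phi(v)-v)=0$. Writing the form as
\begin{align*}
(\bar w-\bar v)^\top \bar Q^1(\bar w-\bar v)+\bar w^\top \bar Q^2\bar w+2\,\bar w^\top \bar Q^3(\bar w-\bar v),
\end{align*}
nonnegativity is then seen term by term. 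The first two terms are nonnegative because they involve nonnegative vectors and nonnegative matrices. The last term is nonnegative because its diagonal entries vanish by complementarity and its off-diagonal entries are nonnegative. This decomposition is in fact the cleanest direct proof, and it does not even require symmetry.
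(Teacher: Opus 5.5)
Your proposal is correct and follows essentially the same route as the paper: lift to the block-Toeplitz matrices $\bar Q^j_{T_0}$, verify the summation identity by direct substitution, and invoke the static repeated-ReLU QC of Lemma~\ref{lem:staticQCsReLU} in dimension $(T_0+1)m$. Your remark that symmetry of the $Q^1_i,Q^2_i$ is tacitly needed is well taken, since the paper asserts $\bar Q^j_{T_0}=\bar Q^{j\top}_{T_0}$ without that assumption; your decomposition $(\bar w-\bar v)^\top\bar Q^1(\bar w-\bar v)+\bar w^\top\bar Q^2\bar w+2\bar w^\top\bar Q^3(\bar w-\bar v)$ removes the assumption cleanly, and the cross terms you worried about already depend only on the symmetric part of $\bar Q^1$, so plain symmetrization would also have worked.
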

\vspace{0.1in}
\begin{proof}
The proof follows a similar approach to that of Lemma~\ref{lem:dynIQCSlopeMIMO}. Define block-Toeplitz matrices $\bar{Q}_{T_0}^1, \bar{Q}_{T_0}^2, \bar{Q}_{T_0}^3 \in \R^{((T_0+1)m)\times ((T_0+1)m)}$ formed from $\{Q_i^j\}_{i=-N}^N$. If $T_0\le N$ then these matrices are
\begin{comment}
\begin{align*}
    &\bar{Q}_{T_0}^j(1:m,:) = \bar{Q}_{T_0}^j(:,1:m)^\top \\
    &:= \bmtx 
    Q_0^j & Q_1^j & \cdots & Q_N^j & 0 &\cdots & 0 \emtx \in \R^{m\times ((T_0+1)m)} \\
    &\text{for $j=1,2$, and} \\
    &\bar{Q}_{T_0}^3(1:m,:) \\
    &:= \bmtx 
    Q_0^3 & Q_1^3 & \cdots & Q_N^3 & 0 &\cdots & 0 \emtx \in \R^{m\times ((T_0+1)m)} \\
    &\bar{Q}_{T_0}^3(:,1:m) \\
    &:=\bmtx Q_0^{3\top} & Q_{-1}^{3\top} & \cdots & Q_{-N}^{3\top} & 0 & \cdots & 0\emtx^\top \in \R^{((T_0+1)m)\times m}    
\end{align*}
\end{comment}
\begin{align*}
\begin{split}
    \bar{Q}_{T_0}^j &:= \bmtx Q_0^j & Q_1^j & \cdots & Q_{T_0}^j \\
    Q_1^j & Q_0^j & \cdots & Q_{T_0-1}^j \\
    \vdots & &  \ddots \\
    Q_{T_0}^j & Q_{T_0-1}^j & \cdots & Q_0^j \emtx, \;\; j=1,2, \\
    \bar{Q}_{T_0}^3 &:= \bmtx Q_0^3 & Q_1^3 & \cdots & Q_{T_0}^3 \\
    Q_{-1}^3 & Q_0^3 & \cdots & Q_{T_0-1}^3 \\
    \vdots & &  \ddots \\
    Q_{-T_0}^3 & Q_{-T_0+1}^3 & \cdots & Q_0^3 \emtx.
\end{split}
\end{align*}
If $T_0>N$ then the first block row and column of $\bar{Q}_{T_0}^j$ are padded with $T_0-N$ blocks of $m\times m$ zero matrices, e.g., the first block row is: 
\begin{align*}
    \bmtx 
    Q_0^j & Q_1^j & \cdots & Q_N^j & 0 &\cdots & 0 \emtx \in \R^{m\times ((T_0+1)m)}, 
\end{align*}
In any case, the assumptions on $\{Q_i^1\}_{i=0}^N$, $\{Q_i^2\}_{i=0}^N$, $\{Q_i^3\}_{i=-N}^N$ imply that $\bar{Q}_{T_0}^1=\bar{Q}_{T_0}^{1\top}$, $\bar{Q}_{T_0}^2=\bar{Q}_{T_0}^{2\top}\in \R_{\ge 0}^{m \times m}$, and $\bar{Q}_{T_0}^3\in \R^{m \times m}$ is a Metzler matrix.

Finally, it can be shown, by direct substitution, that the
output $r$ of $\Psi_N$ driven by $(v,w)$ from zero
initial conditions satisfies the following for any
$T_0\ge 0$: 
\begin{align*}
\begin{split}
  &\sum_{k=0}^{T_0} r(k)^\top M r(k) = \\
  &\bmtx \bar{v}_{T_0}  \\ \bar{w}_{T_0} \emtx^\top
    \bmtx \bar{Q}_{T_0}^1 & -\bar{Q}^{3\top}_{T_0}-\bar{Q}_{T_0}^1 \\ -\bar{Q}_{T_0}^3-\bar{Q}_{T_0}^1 & \bar{Q}_{T_0}^1+\bar{Q}_{T_0}^2+\bar{Q}_{T_0}^3+\bar{Q}_{T_0}^{3\top}\emtx
    \bmtx \bar{v}_{T_0}  \\ \bar{w}_{T_0} \emtx
\end{split}
\end{align*}
This is nonnegative for any $T_0\ge 0$ by Lemma~\ref{lem:staticQCsReLU}.
\end{proof}
\vspace{0.1in}

If $N = 0$, then Lemma~\ref{lem:dynIQCReLUMIMO} reduces to the static QC formulation in Lemma~\ref{lem:staticQCsReLU}. 
If $m = 1$, then all matrices $\{Q^1_i\}_{i=0}^N, \{Q^2_i\}_{i=0}^N  \subset \R_{\ge 0}^{m \times m}$ and $\{Q^3_i\}_{i=-N}^N  \subset \R^{m \times m}$ in Lemma~\ref{lem:dynIQCReLUMIMO} reduce to scalars $\{q^1_i\}_{i=0}^N, \{q^2_i\}_{i=0}^N \subset \R_{\ge 0}$ and $\{q^3_i\}_{i=-N}^N \subset \R$ respectively, leading to the result in \cite{noori2025IQCs}. 

\begin{comment}
\vspace{0.1in}
\begin{lemma}
\label{lem:dynIQCReLUSISO}
Let $\phi:\ell_{2e} \to \ell_{2e}$ be an element-wise ReLU.   Moreover, let $\{q^1_i\}_{i=0}^N, \{q^2_i\}_{i=0}^N \subset \R_{\ge 0}$, and $\{q^3_i\}_{i=-N}^N \subset \R$ be given with  $q^3_i \ge 0 $ for $i\ne 0$. Then $\phi$ satisfies the time-domain hard IQC defined by $(\Psi_N,M)$ where
\begin{align}
\label{eq:M_ReLUSISO}
   M:= \bmtx M_1 & -M^\top_3-M_1 \\ -M_3-M_1 & M_1+M_2+M_3+M_3^\top\emtx,
\end{align}
with
\begin{align}
    \label{eq:M12SISO}
    M_j := 
    \bmtx q^j_0 & q^j_1 & \cdots & q^j_N \\
    q^j_1 & 0 & \cdots & 0\\
    \vdots \\
    q^j_N & 0 & \cdots & 0 \emtx, \;\; j=1,2,
\end{align}
\begin{align}
    \label{eq:M3SISO}
    M_3 := 
    \bmtx q^3_0 & q^3_1 & \cdots & q^3_N \\
    q^3_{-1} & 0 & \cdots & 0\\
    \vdots \\
    q^3_{-N} & 0 & \cdots & 0 \emtx.
\end{align}
\end{lemma}
\vspace{0.1in}
\begin{proof}
This follows from the results in Subsection V-B of \svn{Our ACC paper}.
\end{proof}
\vspace{0.1in}
\end{comment}

\subsection{Stability Condition}
\label{sec:ReLUStab}

This subsection derives a sufficient condition for internal stability of interconnection $F_U(G,\Phi)$ with $\Phi:\ell_2^m \to \ell_2^m$ being a repeated ReLU, based on the hard IQC derived in the previous subsection. Define the transfer function $G$, filter $\Psi_N(z)$, the augmented system $\hat{G}$, and the linear matrix inequality $L(P,M,\gamma^2)$ as in Section~\ref{sec:Prelim}.

\vspace{0.1in}
\begin{corollary}
    \label{cor:StabReLU}
    Consider the interconnection $F_U(G,\Phi)$ where
    $G$ is the LTI system \eqref{eq:LTInom}.  $\Phi:\ell_2^m \to \ell_2^m$ is a repeated ReLU.  Assume the interconnection is well-posed.
    Let $\{Q^1_i\}_{i=0}^N, \{Q^2_i\}_{i=0}^N, \{Q^3_i\}_{i=-N}^{-1}, \{Q^3_i\}_{i=1}^N \subset \R_{\ge 0}^{m \times m}$ be given with $Q^3_0 \in \R^{m \times m}$ a Metzler matrix.
    Then the interconnection $F_U(G,\Phi)$ is internally stable and has $\|F_U(G,\Phi)\|_{2\to 2} < \gamma$, if there exists a positive semidefinite matrix $P$ and scalar $\gamma \ge 0$ such that $L(P,M,\gamma^2) \prec 0$ where $M$ is defined 
    as in Equations~\eqref{eq:M_ReLUMIMO}, \eqref{eq:M12MIMO} and \eqref{eq:M3MIMO}.
\end{corollary}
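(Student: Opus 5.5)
The corollary should follow by chaining Lemma~\ref{lem:dynIQCReLUMIMO} into Lemma~\ref{lem:stabGeneral}. Nothing new needs to be proved: the work is to check that the hypotheses line up.

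First, fix the given multiplier data $\{Q^1_i\}_{i=0}^N$, $\{Q^2_i\}_{i=0}^N$ and $\{Q^3_i\}_{i=-N}^N$. By assumption, the off-zero blocks of $Q^3$ are entrywise nonnegative, $Q^1_i$ and $Q^2_i$ are entrywise nonnegative, and $Q^3_0$ is Metzler. Build $M_1$, $M_2$, $M_3$ and $M$ exactly as in \eqref{eq:M_ReLUMIMO}--\eqref{eq:M3MIMO}. These are precisely the hypotheses of Lemma~\ref{lem:dynIQCReLUMIMO}. That lemma then gives that the repeated ReLU operator $\Delta_\Phi$ satisfies the time-domain hard IQC defined by $(\Psi_N,M)$. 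In other words, for every $v\in\ell_{2e}^m$, every $T_0\ge 0$, and $w=\Phi(v)$ pointwise, we have $\sum_{k=0}^{T_0} r(k)^\top M r(k)\ge 0$, where $r$ is the zero-initial-condition output of $\Psi_N$ given in \eqref{eq:rN}.

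Second, apply Lemma~\ref{lem:stabGeneral} with $F=\Phi$ and $(\Psi,M)=(\Psi_N,M)$:
\begin{itemize}
\item $\Psi_N$ is a finite impulse response filter, so it lies in $\RH^{2m(N+1)\times 2m}$, and the augmented system $\hat{G}$ in \eqref{eq:Ghat} is well defined with a state-space realization \eqref{eq:LTIAugmented}.
\item $M$ is symmetric. Because $M_1$ and $M_2$ are symmetric by construction, the off-diagonal blocks $-M_3^\top-M_1$ and $-M_3-M_1$ are transposes of each other.
\item Well-posedness is assumed in the corollary.
\item The hypotheses $P\succeq 0$, $\gamma\ge 0$ and $L(P,M,\gamma^2)\prec 0$ are exactly those of Lemma~\ref{lem:stabGeneral}.
\end{itemize}
That lemma then yields internal stability and $\|F_U(G,\Phi)\|_{2\to2}<\gamma$.

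The only point needing care is a minor mismatch of signal spaces. Lemma~\ref{lem:stabGeneral} writes $\Delta_F:\ell_2^m\to\ell_2^m$, but its proof uses the IQC along trajectories that are a priori only in $\ell_{2e}$. This is harmless, because Lemma~\ref{lem:dynIQCReLUMIMO} is a hard IQC: it holds for every finite horizon $T_0$ and every $v\in\ell_{2e}$, which is exactly how the proof of Lemma~\ref{lem:stabGeneral} uses it. With that noted, the corollary is immediate.
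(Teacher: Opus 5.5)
Your proposal is correct and matches the paper's argument. The paper's proof is the single line that the corollary follows immediately from Lemma~\ref{lem:stabGeneral} and Lemma~\ref{lem:dynIQCReLUMIMO}, and your hypothesis checks (FIR filter, assumed well-posedness, the hard IQC holding on every finite horizon for $v\in\ell_{2e}$) spell out that chaining. One minor quibble: $M_1$ and $M_2$ are symmetric ``by construction'' only when each $Q^j_i$ is itself symmetric (automatic for $m=1$); the paper leaves the same assumption implicit when it asserts $\bar{Q}^j_{T_0}=\bar{Q}^{j\top}_{T_0}$ in the proof of Lemma~\ref{lem:dynIQCReLUMIMO}, so it does not affect the argument.
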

\vspace{0.1in}
\begin{proof}
This follows immediately from Lemmas \ref{lem:stabGeneral} and \ref{lem:dynIQCReLUMIMO}.
\end{proof}

\begin{remark}
The class of hard IQCs for repeated ReLU, obtained in Lemma~\ref{lem:dynIQCReLUMIMO}, includes the class of hard IQCs for a repeated slope-restricted nonlinearity, obtained in Lemma~\ref{lem:dynIQCSlopeMIMO}.  Let $\{Q_i\}_{i=-N}^N \subset \R^{m \times m}$ be given satisfying the conditions in Lemma~\ref{lem:dynIQCSlopeMIMO}.  We can define an IQC for ReLU as in Lemma \ref{lem:dynIQCReLUMIMO} using the following definitions:
\begin{align*}
    M_1,M_2=0, \;\; M_3=-M_0.
\end{align*}
\end{remark}

\begin{remark}
Suppose that, for some $\gamma$ and $N$, $L(P^N,M^N,\gamma^2)\prec 0$. Then the inequality $L(P^{N+1},M^{N+1},\gamma^2)\prec 0$ also holds for the same $\gamma$. Define
\begin{align*}
    P^{N+1} := \bmtx P^N & 0\\ 0 & 0\emtx, \;\; M^{N+1} := \bmtx M^N & 0 \\ 0 & 0\emtx.
\end{align*}
Therefore, any certificate feasible at horizon $N$ remains feasible at horizon $N+1$.
Increasing the filter horizon enlarges the admissible class of hard IQCs without imposing additional restrictions on the dissipation inequality.  It follows that the optimal induced-$\ell_2$ gain bound is non-increasing as a function of the filter horizon.
\end{remark}

\section{Numerical examples}

In this section, we use the condition in Corollary~\ref{cor:StabReLU} to analyze the induced-$\ell_2$ norm for $F_U(G,\Phi)$ in Figure~\ref{fig:LFTdiagram}.  The nominal part $G$ is a discrete-time, linear time-invariant (LTI) system given in \eqref{eq:LTInom} with the following transfer function:
\begin{align}
    \label{eq:GExample}
    G = \bmtx \frac{-0.13}{z-0.98} & \frac{0.21}{z-0.92} & 1 & 0 \\
    \frac{-0.3}{z-0.97} & \frac{-0.1}{z-0.91} & 0 & 1 \\
    1 & 0 & 0 & 0
    \emtx
\end{align}
The nonlinearity $F$ has $n_v=2$ inputs and $n_w=2$ outputs.  The dimensions of the disturbance and error channels are $n_d=2$ and $n_e=1$, respectively. This is a simple academic example of an RNN. 

The ``best" gain bound $\gamma$ for a given lifting horizon $N+1$ in the static QC case \cite{noori24ReLURNN}, and filter horizon $N$ in the dynamic IQC case is obtained by minimizing $\gamma^2$ subject to the LMI constraint  $L(P,M,\gamma^2)<0$, as defined in \eqref{eq:LiftedLMI}.
This minimization is a semidefinite program and was solved using CVX \cite{cvx14} as a front-end and SDPT3 \cite{toh99,tutuncu03} as the solver. 
The top row  of Table~\ref{tab:Results} shows the results as a function of the lifting horizon $N$ in filter $\Psi_N$.

\begin{table}[h!]
\centering
\scalebox{0.86}{
\begin{tabular}{ |c|c|c c c c| } 
 \hline
Nonlin.  & Analysis Type & $N=0$ & $N=1$ & $N=2$ & $N=3$\\ \hline
 \textbf{ReLU} & \textbf{Dynamic IQCs with  $\Psi_N$} & \textbf{4.017} & \textbf{1.554} & \textbf{1.300} & \textbf{1.136}\\ 
 ReLU & Static QCs with $G_{N+1}$ & 4.017 & 3.761 & 3.410 & 2.357\\  \hline \hline
 Slope Res. & Dynamic IQCs with $\Psi_N$ & 14.22 & 1.787 & 1.698 & 1.698 \\  
 Slope Res. &  Static QCs with $G_{N+1}$ & 14.22 & 8.910 & 6.473 & 3.308\\
\hline
 \end{tabular}
 }
\caption{$\ell_2$-gain upper-bounds given for ReLU and a nonlinearity that has slope restricted to $[0,1]$, using static and dynamic IQCs with different lifting and filter horizons.}
\label{tab:Results}
\end{table}

\begin{table}[h!]
\centering
\scalebox{0.86}{
\begin{tabular}{ |c|c|c c c c| } 
 \hline
Nonlin.  & Analysis Type & $N=0$ & $N=1$ & $N=2$ & $N=3$\\ \hline
\textbf{ReLU} & \textbf{Dynamic IQCs with  $\Psi_N$} & \textbf{0.4584} & \textbf{0.6230} & \textbf{0.8980} & \textbf{1.335} \\ 
ReLU & Static QCs with $G_{N+1}$ & 0.4149 & 0.3624 & 0.5707 & 0.7872 \\ \hline \hline
Slope Res. & Dynamic IQCs with $\Psi_N$ & 0.3904 & 0.4047 & 0.5668 & 0.7594\\ 
Slope Res. &  Static QCs with $G_{N+1}$ & 0.4266 & 0.3069 & 0.3616 & 0.4968\\ \hline
 \end{tabular}
 }
\caption{Comparison of average computational times (in seconds)  computed over 10 runs.}  
    \label{tab:CompTimeStabResults}
\end{table}

The case $N=0$ corresponds to the same QCs and LMI condition for the methods labelled ``dynamic IQCs" and ``static QCs".  Hence rows 2 and 3 provide the same minimal bound of 4.017.  Similarly rows 4 and 5 provide the same bound of 14.21.  It is also notable that the rows for the ReLU provide smaller bounds than the corresponding rows for slope-restricted nonlinearities. This is expected since the repeated ReLU is a special case of $[0,1]$ slope-restricted nonlinearities.  Our constraints for repeated ReLU are stronger, i.e, less conservative, than the constraints for slope restricted nonlinearities.  Finally, we note that the bounds improve (become smaller) as the lifting horizon $N$ increases.

%-----------------------------------------------------------------
\section{Conclusions}
This paper developed a new dynamic IQC for repeated ReLU nonlinearities.  We used this IQC to analyze the stability and performance of recurrent neural networks with ReLU activation functions. We showed that our new IQC provides less conservative performance bounds as compared to the class of Zames--Falb multipliers which are used for slope-restricted nonlinearities.  We also showed that the performance bounds are monotone nonincreasing with respect to the IQC filter horizon. These results were illustrated by a simple example.  Future work will consider how to incorporate more general static QCs for repeated ReLU, which are based on copositivity conditions, into this dynamic IQC framework.

%------------------------------------------------------------------
\section{Acknowledgments}

The authors acknowledge AFOSR Grant \#FA9550-23-1-0732 for funding of this work. The authors also acknowlege useful comments from Felix Biertuempfel.

\bibliographystyle{IEEEtran}
\bibliography{references} 

%------------------------------------------------------------------
\appendix 

The nominal model has the state-space representation in   \eqref{eq:LTInom}. A state space representation of the IQC filter is:
\begin{align}
\label{eq:PsiSS}
    \bmtx \psi(k+1) \\ r(k)\emtx = \bmtx A_{\psi} & B_{\psi 1} & B_{\psi 2} \\ C_{\psi} & D_{\psi 1} & D_{\psi 2} \emtx 
    \bmtx \psi(k) \\ v(k) \\ w(k)\emtx.
\end{align}
Then the augmented system
\eqref{eq:LTIAugmented} can be written with the state $\hat{x}:=\bmtx x^\top &  \psi^\top \emtx^\top$ and the following state-space matrices:
\begin{align*}
    \hat{A} &:= \bmtx A & 0 \\ B_{\psi 1}C_1 & A_{\psi}\emtx, 
    \\ 
    \hat{B}_1 & := \bmtx B_1 \\ B_{\psi 1}D_{11}+B_{\psi 2}\emtx, 
    &
    \hat{B}_2 &:= \bmtx B_2 \\ B_{\psi 1}D_{12} \emtx, 
    \\
    \hat{C}_1 & := \bmtx D_{\psi 1}C_{1} & C_{\psi}\emtx, 
    &
    \hat{C}_2 &:= \bmtx C_2 & 0\emtx,     
    \\
    \hat{D}_{11} &:= D_{\psi 1}D_{11} + D_{\psi 2},
    &
    \hat{D}_{12} & := D_{\psi 1}D_{12}, \\   
    \hat{D}_{21} & := D_{21}, 
    &
    \hat{D}_{22} & := D_{22}.
\end{align*}

\end{document}